\documentclass{article}

\usepackage{hyperref}
\usepackage{subcaption}

\usepackage{amsmath}
\usepackage{amsthm}
\newtheorem{thm}{Theorem}
\newtheorem{lem}[thm]{Lemma}

\usepackage{graphicx}
\usepackage{tikz}
\usetikzlibrary{positioning,arrows,patterns}

\usepackage{adjustbox}

\usepackage{siunitx}

\usepackage{latexsym}
\usepackage{algorithm}
\usepackage{algpseudocode}
\algnewcommand\algorithmicinput{\textbf{INPUT:}}
\algnewcommand\Input{\item[\algorithmicinput]}
\algnewcommand\algorithmicoutput{\textbf{OUTPUT:}}
\algnewcommand\Output{\item[\algorithmicoutput]}
\algnewcommand\algorithmicglobal{\textbf{GLOBAL VARIABLES:}}
\algnewcommand\Global{\item[\algorithmicglobal]}

\usepackage{color}

\usepackage{authblk}
\title{A Maximum Edge-Weight Clique Extraction Algorithm Based on Branch-and-Bound}
\author{Satoshi Shimizu\thanks{ss81054@gmail.com}}
\author{Kazuaki Yamaguchi\thanks{ky@kobe-u.ac.jp}}
\author{Sumio Masuda\thanks{masuda@kobe-u.ac.jp}}
\affil{Graduate School of Engineering, Kobe University}

\begin{document}
\maketitle

\begin{abstract}
The maximum edge-weight clique problem is
to find a clique whose sum of edge-weight is the maximum
for a given edge-weighted undirected graph.
The problem is NP-hard and some branch-and-bound algorithms have been proposed.
In this paper, we propose a new exact algorithm based on branch-and-bound.
It assigns edge-weights to vertices and calculates upper bounds using vertex coloring.
By some computational experiments, we confirmed our algorithm is faster
than previous algorithms.
\end{abstract}

\section{Introduction}
\label{sec:introduction}

For a simple undirected graph $G=(V,E)$,
a vertex subset $C\in V$ is called a clique if any pair of vertices in $C$ are adjacent.
Given a simple undirected graph $G=(V,E)$,
the maximum clique problem (MCP) is to find the clique of maximum cardinality.
MCP has lots of practical applications:
bioinformatics \cite{kc2002point},
coding theory \cite{etzion1998greedy,bogdanova2001error},
economics \cite{boginski2006mining} and more.
MCP is known to be NP-hard \cite{gary1979computers},
and the decision version is one of the Karp's 21 NP-complete problems \cite{karp1972reducibility}.
Since it has theoretical importance,
there have been a number of studies in decades \cite{wu2015survey}.

Given a simple undirected graph $G=(V,E)$ and non-negative weight $w(\cdot,\cdot)$ for each edge,
the maximum edge-weight clique problem (MEWCP) is to find the clique of maximum weight.
Obviously, MEWCP is a generalization of MCP.
Because of  edge-weights, MEWCP has practical applications that cannot be handled by MCP:
pattern recognition \cite{pavan2003generalizing}
protein side-chain packing \cite{bahadur2004protein,brown2006multiple},
market basket analysis \cite{cavique2007scalable},
communication analysis \cite{corman2002studying} and so on.

To obtain exact solutions of MEWCP,
there are two approaches in previous works.
One approach is formulating MEWCP into mathematical programming
and solve it by existing solvers.
Formulations based on
integer programming (IP) \cite{gouveia2015solving} and
mixed integer programming (MIP) \cite{shimizu2017mathematical}
were proposed.

The other approach is based on branch-and-bound.
Branch-and-bound algorithms recursively divide subproblems into smaller subproblems to search optimal solutions.
For each subproblem, it calculates upper bounds of the weight of feasible solutions
and prunes unnecessary subproblems that have no possibility to
improve the incumbent (current best solution).
Variety of algorithms adopt different strategies in
\textit{branching strategy}, \textit{search strategy} and \textit{pruning rule}.
The branching strategy is how to divide a given problem into subproblems.
The search strategy is the order in which subproblems are explored.
The pruning rule is how to calculate upper bounds to prune unnecessary subproblems.
A survey of branch-and-bound is shown in \cite{morrison2016branch}.
For MEWCP, CBQ proposed in \cite{hosseinian2018nonconvex} uses quadratic relaxation to obtain upper bounds.
Our previous algorithm EWCLIQUE is also based on the branch-and-bound \cite{shimizu2018branch}.
EWCLIQUE decomposes edge-weights of each subproblem into three components,
and calculates an upper bound for each of them.

In this paper, we propose a new branch-and-bound algorithm MECQ for MEWCP.
For each subproblem, our algorithm assigns weights of edges to vertices.
To obtain upper bounds using the assigned vertex weights,
our algorithm calculates vertex coloring that is a procedure
to divide the vertex set into a collection of independent sets.
By some computational experiments, we confirm our algorithm is faster than previous methods.

The remainder of this paper is organized as follows.
Our algorithm MECQ is described in Section \ref{sec:our algorithm}.
The results of computational experiments are in Section \ref{sec:experiments}.
We conclude the paper in Section \ref{sec:Conclusion}.

\section{Our algorithm MECQ}
\label{sec:our algorithm}

The proposed algorithm MECQ is based on the branch-and-bound.
Hereafter let $P(C,S)$ be a subproblem of MEWCP,
where $C$ is a constructed clique
and $S$ is a set of candidate vertices to be added to $C$.
Note that $C\subseteq N(v)$ must be satisfied for any element $v \in S$,
where $N(v)$ is the set of adjacent vertices of $v$.
$P(\emptyset,V)$ corresponds to the instance of MEWCP.
In this section, we first describe pruning rules of our algorithm.
Next, we show the branching strategy that divides $P(C,S)$ into subproblems,
and the search strategy to determine the order of subproblems to be solved.

\subsection{Pruning Rules}
\label{sec:Pruning Rules}

First, we describe an upper bound calculation for graphs where both vertices and edges are weighted.
Then, we show that an upper bound of $P(C,S)$ can be calculated in the same way.

For a graph $G=(V,E)$,
let $w(v)$ and $w(u,v)$ denote the weight of vertex $v$ and the weight of edge $(u,v)$, respectively.
Hereafter we define $w(u,v)=0$ for any $(u,v)\notin E$ for simplicity.
Let $G(S)$ be a subgraph of $G$ induced by a set of vertices $S\subseteq V$.
$E(S)$ denotes the edge set of $G(S)$.
For $S\subseteq V$, let $W(S)=\sum_{v\in S}w(v)+\sum_{(u,v)\in E(S)}w(u,v)$.

\subsubsection{Upper bound of vertex-and-edge-weighted graph}
\textit{Vertex coloring} is to color vertices
such that no adjacent vertices have the same color.
A vertex set of each color forms an independent set.
The smallest number of colors needed to color a graph $G$ is called chromatic number $\chi(G)$.
Let $\omega(G)$ be the clique number (the number of vertices in a maximum clique).
Since at most one vertex can be included in a clique from each independent set,
$\chi(G)$ is an upper bound of $\omega(G)$.
Therefore heuristic vertex coloring is often used to obtain upper bounds for MCP \cite{tomita2010simple,tomita2016much}.
For the maximum weight clique problem (MWCP),
the sum of the maximum vertex weight of each independent set is used as an upper bound
\cite{kumlander2004new,shimizu2012some}.
To calculate upper bounds of the MEWCP,
we consider assigning edge weights to incident vertices.
Let $\tau(v)$ be the index of the independent set including vertex $v$.
Namely, $\tau(v)=i$ for all $v\in I_i$.
Let $\sigma[v]$ be the total weight assigned to the vertex $v$ as follows:
\begin{equation}
    \sigma[v]=w(v) + \sum_{i<\tau(v)}\max\{w(u,v) \mid u\in I_i\cap N(v)\}. \label{eq:edge weight assignment}
\end{equation}

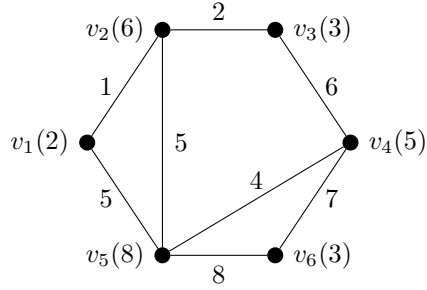
\begin{figure}[t]
    \centering
    \begin{adjustbox}{max width=\columnwidth}
        \begin{tikzpicture}[every node/.style={circle,draw=black,fill=black,inner sep=2pt}]
            \node[label=right:$v_6(3)$] (v6) at (2.5,0) {};
            \node[label=left:$v_5(8)$]  (v5) at (1,0) {};
            \node[label=right:$v_4(5)$] (v4) at (3.5,1.5) {};
            \node[label=right:$v_3(3)$] (v3) at (2.5,3) {};
            \node[label=left:$v_2(6)$]  (v2) at (1,3) {};
            \node[label=left:$v_1(2)$]  (v1) at (0,1.5) {};

            \draw[-] (v1) -- (v2) node[midway,left,draw=none,fill=none] {1};
            \draw[-] (v1) -- (v5) node[midway,left,draw=none,fill=none] {5};
            \draw[-] (v2) -- (v3) node[midway,above,draw=none,fill=none] {2};
            \draw[-] (v2) -- (v5) node[midway,right,draw=none,fill=none] {5};
            \draw[-] (v3) -- (v4) node[midway,right,draw=none,fill=none] {6};
            \draw[-] (v4) -- (v5) node[midway,above,draw=none,fill=none] {4};
            \draw[-] (v4) -- (v6) node[midway,right,draw=none,fill=none] {7};
            \draw[-] (v5) -- (v6) node[midway,below,draw=none,fill=none] {8};
        \end{tikzpicture}
    \end{adjustbox}
    \caption{a graph example $G_{ex}$}
    \label{fig:graph example}
\end{figure}
\begin{figure}[t]
    \centering
    \begin{tabular}{c|r|r|rl}
        && $\tau(\cdot)$ & \multicolumn{2}{c}{$\sigma[\cdot]$ of equation \ref{eq:edge weight assignment}} \\ \hline
        $I_1$&$v_1$ &     1    &2 & = $w(v_1)$ \\
        &$v_3$ &          &3 & = $w(v_3)$ \\
        &$v_6$ &          &3 & = $w(v_6)$ \\
        \hline
        $I_2$&$v_2$ &     2    &8 & = $w(v_2) + \max\{w(v_1,v_2),w(v_2,v_3)\}$ \\
        &$v_4$ &          &12 & = $w(v_4) + \max\{w(v_3,v_4),w(v_4,v_6)\}$ \\
        \hline
        $I_3$&$v_5$ &     3    &21 & = $w(v_5) + \max\{w(v_1,v_5),w(v_5,v_6)\}$\\
        &      &          &  &  \textcolor{white}{=}+ $\max\{w(v_2,v_5),w(v_4,v_5)\}$ \\
    \end{tabular}
    \caption{$\sigma[\cdot]$ for  $G_{ex}$}
    \label{fig:w_sigma for G_ex}
\end{figure}

An example $G_{ex}$ of a vertex-and-edge-weighted graph is shown in Figure \ref{fig:graph example}.
Numbers in parentheses are the vertex weights.
For $G_{ex}$, Figure \ref{fig:w_sigma for G_ex} shows
the assignment of independent set indices $\tau(\cdot)$ and
the weight $\sigma[\cdot]$ of equation \ref{eq:edge weight assignment}.
At most one vertex of $I_1\cap N(v_5) = \{v_1,v_6\}$
can be included in a clique since $I_1$ is an independent set.
Therefore we assign only the larger weight of edges $(v_1,v_5)$ and $(v_5,v_6)$ to $v_5$ for upper bound calculation.
We can ignore smaller weights to tighten upper bounds.

The following lemma shows that
an upper bound of the MEWCP can be calculated by using vertex coloring.

\begin{lem}
    \label{pr:lem upper bound}
    For a clique $C$ in a vertex-and-edge-weighted graph, the following inequality holds
    where $k=\max\{\tau(v) \mid v\in V\}$:
    \begin{eqnarray}
        W(C) \leq \sum_{i=1}^k\max\{\sigma[v]\mid v\in I_i\}. \label{eq:coloring upper bound}
    \end{eqnarray}
\end{lem}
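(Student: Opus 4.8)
The plan is to charge every edge weight of the clique $C$ to exactly one of its two endpoints, and then to compare the resulting per-vertex totals with the weights $\sigma[\cdot]$. The starting observation is that any two vertices of a clique are adjacent and hence receive different colors, so for each edge $(u,v)\in E(C)$ exactly one endpoint has the larger color index. For $v\in C$ I would define the local charge
\[
\hat\sigma[v] \;=\; w(v) \;+\; \sum_{\substack{u\in C\\ \tau(u)<\tau(v)}} w(u,v),
\]
so that each edge of $E(C)$ is counted once, at its endpoint of larger color index, and thus $\sum_{v\in C}\hat\sigma[v]=W(C)$. This reduces the lemma to two claims: $\hat\sigma[v]\le\sigma[v]$ for every $v\in C$, and $\sum_{v\in C}\sigma[v]\le\sum_{i=1}^k\max\{\sigma[u]\mid u\in I_i\}$.

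For the first claim, fix $v\in C$ and group the neighbors $u\in C$ with $\tau(u)<\tau(v)$ by their color class. The key point is that for each index $i<\tau(v)$ the set $C\cap I_i$ has at most one element, because $I_i$ is independent while $C$ is a clique; and when it has an element $u$, that $u$ lies in $I_i\cap N(v)$ (again since $C$ is a clique), so $w(u,v)\le\max\{w(u',v)\mid u'\in I_i\cap N(v)\}$. Summing over $i<\tau(v)$ and comparing with the definition (\ref{eq:edge weight assignment}) of $\sigma[v]$ gives $\hat\sigma[v]\le\sigma[v]$; the classes with $C\cap I_i=\emptyset$ contribute nothing on the left and something non-negative on the right, so they cause no trouble (using the convention $\max\emptyset=0$ for the inner maximum).

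For the second claim, since $v\in I_{\tau(v)}$ we have $\sigma[v]\le\max\{\sigma[u]\mid u\in I_{\tau(v)}\}$, and the vertices of $C$ occupy pairwise distinct color classes, so the sum over $v\in C$ hits each color class at most once; as every term $\max\{\sigma[u]\mid u\in I_i\}$ is non-negative, reinstating the missing classes only increases the right-hand side, giving $\sum_{v\in C}\sigma[v]\le\sum_{i=1}^k\max\{\sigma[u]\mid u\in I_i\}$. Chaining $W(C)=\sum_{v\in C}\hat\sigma[v]\le\sum_{v\in C}\sigma[v]\le\sum_{i=1}^k\max\{\sigma[u]\mid u\in I_i\}$ then yields (\ref{eq:coloring upper bound}).

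The argument is essentially bookkeeping, and I expect the only thing to handle carefully is using the clique/independent-set incompatibility ($|C\cap I_i|\le 1$) correctly in both places: once to bound the edge weights charged to a single vertex, and once to bound the total over color classes. The non-negativity of the weights, together with the fact that the right-hand side maximizes over all of $I_i$ rather than over $C\cap I_i$, is exactly what makes the empty-class cases harmless, so I do not anticipate a genuine obstacle beyond stating these facts cleanly.
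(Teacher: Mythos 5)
Your proof is correct and follows essentially the same route as the paper's: you charge each edge of $E(C)$ to its endpoint of larger color index (your $\hat\sigma$ is exactly the intermediate quantity in the paper's chain of equalities), bound it by $\sigma$ using $|C\cap I_i|\le 1$, and then bound $\sum_{v\in C}\sigma[v]$ by the per-class maxima. The only difference is presentational — you name the intermediate charge and spell out the empty-class and $\max\emptyset=0$ conventions explicitly, which the paper leaves implicit.
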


\begin{proof}
    Since at most one vertex in $I_i$ can be included in $C$, $|C\cap I_i|\leq 1$ holds.
    From the definition, $C\subseteq N(v)$ for all $v\in C$.
    Therefore following inequality is obtained:
    \begin{eqnarray}
        W(C) & = & \sum_{v\in C} w(v) + \sum_{(u,v)\in E(C)}w(u,v)\\
        &  = & \sum_{v\in C} w(v) + \sum_{v\in C}\sum_{i<\tau(v)}\sum_{u\in C\cap I_i}w(u,v)\\
        &\leq& \sum_{v\in C} w(v) + \sum_{v\in C}\sum_{i<\tau(v)}\max\{w(u,v)\mid u\in N(v)\cap I_i\}\\
        &  = & \sum_{v\in C} \sigma[v]\\
        &  = & \sum_{i=1}^k\sum_{v\in C\cap I_i} \sigma[v]\\
        &\leq& \sum_{i=1}^k\max\{\sigma[v]\mid v\in I_i\}
    \end{eqnarray}
\end{proof}

Our algorithm uses Lemma \ref{pr:lem upper bound} to obtain upper bounds.
In the example $G_{ex}$, the clique of maximum weight is $\{v_4,v_5,v_6\}$ and its weight is 35.
Using $\sigma[\cdot]$ of Figure \ref{fig:w_sigma for G_ex},
an upper bound can be calculated by Lemma \ref{pr:lem upper bound} as follows:
\begin{eqnarray}
    W(C) & \leq & \max\{\sigma[v_1],\sigma[v_3],\sigma[v_6]\} + \max\{\sigma[v_2],\sigma[v_4]\} \nonumber\\
    & &  + \max\{\sigma[v_5]\} \\
    & = & 36
\end{eqnarray}

\subsubsection{Upper bound calculation for MEWCP}

Let $F$ be any feasible solution of a subproblem $P(C,S)$ of MEWCP.
$F$ is a union of $C$ and a subset of $S$.
$W(F)$ can be calculated as follows:
\begin{equation}
    W(F) =  W(C) + \sum_{u\in C}\sum_{v\in S\cap F}w(u,v) + W(S\cap F). \label{eq:three components}
\end{equation}
In the branch-and-bound, our algorithm calculates upper bounds of $W(F)$
to prune unnecessary subproblems.
Since the term $W(C)$ is obviously obtained in branching steps,
we have to calculate an upper bound of $\sum_{u\in C}\sum_{v\in S\cap F}w(u,v)$ $+$ $W(S\cap F)$.

To obtain vertex-and-edge-weighted graphs,
our algorithm assigns edge-weights of $\{(u,v)\mid u\in C,v\in S\cap F\}$
to vertices $v \in S\cap F$.
Let $w_\rho(C,v)$ be the total edge-weight assigned to $v \in S\cap F$ as follows:
\begin{equation}
    w_\rho(C,v) = \sum_{u\in C}w(u,v). \label{eq:pseudo vertex weight}
\end{equation}
Hereafter $w_\rho(v)$ denotes $w_\rho(C,v)$ when $C$ can be obviously identified.
Using  $w_\rho(v)$ and equation \ref{eq:three components},
following equation holds:
\begin{equation}
    W(F) =  W(C) + \sum_{v\in S\cap F}w_\rho(v) + W(S\cap F). \label{eq:three components using psuedo vertex weight}
\end{equation}
Note that the assigned weight $w_\rho(\cdot)$ and equation
\ref{eq:three components using psuedo vertex weight} is originally proposed in our previous work \cite{shimizu2018branch}.

For the vertex induced subgraph $G(S)$ of $P(C,S)$,
let $G(C,S)$ be the subgraph that can be obtained by assigning the weight $w_\rho(v)$ to each vertex in $G(S)$.
The proposed algorithm uses the vertex-and-edge-weighted graph $G(C,S)$ to calculate an upper bound of $W(F)$.
Equation \ref{eq:three components using psuedo vertex weight} indicates that
the sum of $W(C)$ and an upper bound of clique weight in $G(C,S)$ is an upper bound of $W(F)$.
Hence the proposed algorithm calculates an upper bound of $W(F)$ for $P(C,S)$ as follows:
\begin{enumerate}
    \item Obtain the vertex-and-edge-weighted graph $G(C,S)$
          using $w_\rho(\cdot)$ of equation \ref{eq:pseudo vertex weight}.
    \item Divide $S$ into mutually disjoint independent sets $I_1,I_2,\ldots,I_k$ by vertex coloring.
    \item Calculate $\sigma[\cdot]$ for each vertex in $G(C,S)$ using equation \ref{eq:edge weight assignment}.
    \item Calculate an upper bound of $W(F)$ using Lemma \ref{pr:lem upper bound}.
\end{enumerate}

\subsection{Branch-and-bound}
\label{sec:Branch and Bound}

Algorithm \ref{alg:MECQ} shows the main part of the proposed algorithm.
The inputs are a graph $G=(V,E)$, edge-weights $w(\cdot,\cdot)$
and an initial solution $C_{initial}$.
It searches for solutions by the recursive procedure EXPAND.
Our algorithm accepts a feasible solution $C_{initial}$ as an initial incumbent.
Although our algorithm works when $C_{initial}$ is empty,
given non-empty $C_{initial}$, our algorithm can use it as a lower bound and can efficiently prune subproblems in some cases.

\begin{algorithm}[!t]
    \caption{MECQ}
    \label{alg:MECQ}
    \begin{algorithmic}[1]
        \Input $G=(V,E)$, $w(\cdot,\cdot)$, $C_{initial}$
        \Output a maximum edge-weight clique $C_{max}$
        \Global $C_{max}$
        \State $C_{max}\leftarrow C_{initial}$
        \State \Call{expand}{$\emptyset,V$}
        \State \Return $C_{max}$
    \end{algorithmic}
\end{algorithm}

\begin{algorithm}[!t]
    \caption{Solving a subproblem}
    \label{alg:expand}
    \begin{algorithmic}[1]
        \Input a subproblem $P(C,S)$
        \Output Update $C_{max}$ to a better clique if it exists.
        \Global $C_{max}$

        \Procedure{expand}{$C,S$}
        \If{$S=\emptyset$} \label{step:start updating C}
        \If{$W(C) > W(C_{max})$} \label{step:updating C_max}
        \State $C_{max} \leftarrow C$
        \EndIf \label{step:updating C_max done}
        \State \Return \label{step:return without recursive calls}
        \EndIf \label{step:end updating C}
        \State $\Pi, upper[\cdot]\leftarrow$ \Call{CALC\_SEQ\_AND\_UB}{$C,S$}\label{step:call coloring}
        \For{each $p_i$ in order of $\Pi$} \label{step:expand for start} \Comment{$\Pi=[p_1,p_2,\ldots,p_{|S|}]$}
        \If{$W(C) + upper[p_i] > W(C_{max})$}\label{step:bounding}
        \State \Call{expand}{$C\cup\{p_i\},(S\setminus\{p_j\mid j<i\})\cap N(p_i)$} \label{step:recursive expand}
        \EndIf
        \EndFor \label{step:expand for end}
        \EndProcedure
    \end{algorithmic}
\end{algorithm}

Algorithm \ref{alg:expand} shows the recursive procedure EXPAND
to update the best solution so far.
When $S$ is empty, it is the base case that
updates the optimal solution $C_{max}$ (lines from \ref{step:start updating C} to \ref{step:end updating C}).
Otherwise, at line \ref{step:call coloring},
the function CALC\_SEQ\_AND\_UB returns
a sequence $\Pi=[p_1,p_2,\ldots,p_{|S|}]$ of vertices in $S$
and an array $upper[\cdot]$ of upper bounds using vertex coloring (described in \ref{sec:subroutine coloring}).
In the loop of lines from \ref{step:expand for start} to \ref{step:expand for end} in Algorithm \ref{alg:expand},
it recursively searches solutions at line \ref{step:recursive expand}.
The branching strategy, pruning rules and search strategy of our algorithm are as follows:
\begin{description}
    \item[Branching Strategy] \mbox{}\\
        For each $p_i$ of $\Pi$,
        our algorithm generates a child subproblem $P(C\cup \{p_i\},(S\setminus\{p_j\mid j<i\})\cap N(p_i))$.
        Excepting the order of vertices in $\Pi$,
        this strategy is same as previous algorithm EWCLIQUE \cite{shimizu2018branch}
        and is widely used in branch-and-bound algorithms of MCP and MWCP \cite{san2011exact,tomita2007efficient,shimizu2017fast,fang2016exact}.
    \item[Pruning Rules] \mbox{}\\
        For each $P(C\cup \{p_i\},(S\setminus\{p_j\mid j<i\})\cap N(p_i))$,
        an upper bound based on equation \ref{eq:coloring upper bound} is stored in the array $upper[p_i]$.
    \item[Search Strategy] \mbox{}\\
        In order of $\Pi=[p_1,p_2,\ldots p_{|S|}]$,
        our algorithm searches $P(C\cup \{p_i\},(S\setminus\{p_j\mid j<i\})\cap N(p_i))$.
        Since $\Pi=[p_1,p_2,\ldots,p_{|S|}]$ is ordered in non-increasing of $upper[\cdot]$ (described in \ref{sec:subroutine coloring}),
        this strategy is to find cliques of large weight early.
\end{description}

\subsubsection{Subroutine CALC\_SEQ\_AND\_UB}
\label{sec:subroutine coloring}

Algorithm \ref{alg:Vertex coloring} shows the function CALC\_SEQ\_AND\_UB.
It receives a subproblem $P(C,S)$ and returns a sequence $\Pi=[p_1,p_2,\ldots,p_{|S|}]$ of vertices in $S$ and an array $upper[\cdot]$ of upper bounds.
The array $upper[p_i]$ contains an upper bound of $P(C\cup \{p_i\},(S\setminus\{p_j\mid j<i\})\cap N(p_i))$.
It is used at line \ref{step:bounding} of Algorithm \ref{alg:expand}.
$\Pi$ is ordered in non-increasing of $upper[\cdot]$ and is used in branching strategy and search strategy.

\begin{algorithm}[tb]
    \caption{Calculate a vertex sequence and upper bounds}
    \label{alg:Vertex coloring}
    \begin{algorithmic}[1]
        \Input a subproblem $P(C,S)$
        \Output a vertex sequence $\Pi$ and an array $upper[\cdot]$

        \Procedure{CALC\_SEQ\_AND\_UB}{$C,S$}
        \For {$v\in S$}
        \State $\sigma[v]\leftarrow w_\rho(v)$ \label{step:init w_sigma}
        \EndFor
        \State $S'\leftarrow S$ \Comment{uncolored vertex set}
        \State $k\leftarrow 0$ \Comment{number of independent sets}
        \While {$S'\ne \emptyset$} \label{step:coloring loop start}
        \State $k\leftarrow k+1$
        \State $I_k\leftarrow \emptyset$
        \State $X\leftarrow S'$ \Comment{candidate vertex set to add to $I_k$}
        \While {$X\ne \emptyset$} \label{step:create maximal independent set start}
        \State $v\leftarrow$ a vertex of minimum $\sigma[\cdot]$ in $X$ \label{step:select minimum weight vertex}
        \State $upper[v]\leftarrow \sigma[v] + \sum_{i < k} \max\{\sigma[u]\mid u\in I_i\}$
        \State $I_k\leftarrow I_k \cup \{v\}$ \Comment{$\tau(v)=k$}
        \State Append $v$ to the head of $\Pi$. \label{step:add to seq}
        \State $X\leftarrow X\setminus N(v)$
        \State $S'\leftarrow S'\setminus \{v\}$
        \EndWhile \label{step:create maximal independent set end}
        \For {$v\in S'$}
        \State $\sigma[v]\leftarrow \sigma[v] + \max\{w(u,v)\mid u\in N(v)\cap I_k\}$ \label{step:add weight}
        \EndFor
        \EndWhile \label{step:coloring loop end}

        \State \Return $\Pi,upper[\cdot]$

        \EndProcedure
    \end{algorithmic}
\end{algorithm}

Here we describe the detail of Algorithm \ref{alg:Vertex coloring}.
At line \ref{step:init w_sigma},
it initializes $\sigma[\cdot]$ to $w_\rho(\cdot)$.
Each iteration of the while loop from line
\ref{step:coloring loop start} to \ref{step:coloring loop end},
it increments $k$ and constructs a maximal independent set $I_k$,
appends the vertices in $I_k$ to $\Pi$, and updates $\sigma[\cdot]$.
The loop terminates when all vertices are added to $\Pi$.
In the loop of lines from \ref{step:create maximal independent set start}
to \ref{step:create maximal independent set end},
it constructs a maximal independent set.
In line \ref{step:add to seq},
it appends the vertices in $I_k$ to the head of $\Pi$ in order of assignment to independent sets.
In the maximal independent set construction,
$X$ is the set of candidate vertices to be added to the independent set.
At line \ref{step:select minimum weight vertex},
our algorithm picks vertices from $X$ in non-decreasing order of $\sigma[\cdot]$.
This makes $\Pi$ non-increasing order of upper bounds.
At line \ref{step:add weight},
our algorithm updates $\sigma[\cdot]$ for vertices that
are not added to any independent set and are adjacent to vertices in the constructed independent set.

\section{Computational experiments}
\label{sec:experiments}

We implemented our algorithm MECQ in C++ to compare with previous algorithms.
In the experiments, our algorithm received an initial solution $C_{initial}$
calculated by phased local search (PLS) \cite{pullan2008approximating}.
PLS is a heuristic based on local search.
To avoid to be trapped into local optimums,
it switches three phases that have different search policies.
The one iteration of PLS consists of
50 searches of random phase,
50 searches of penalty phase and
100 searches of degree phase.
Our algorithm used PLS with 10 iterations and used
the best solution found by the PLS as an initial solution.

\subsection{Random graphs}

We generated uniform random graphs.
Edge-weights were uniform random integer values from 1 to 10.
The compared algorithms are EWCLIQUE \cite{shimizu2018branch}
and mathematical programming formulations of MIP proposed in \cite{shimizu2017mathematical}.
We used the C++ implementation of EWCLIQUE that was used in our previous work \cite{shimizu2018branch}.
For the formulations of MIP, we used the mathematical programming solver IBM CPLEX 12.5.

The compiler is g++ 5.4.0 with optimization option \mbox{-O2}.
The OS is Linux 4.4.0.
The CPU is Intel\textregistered Core\texttrademark i7-6700 CPU 3.40 GHz.
RAM is 16GB.
Note that CPLEX is a multi-thread solver based on branch-and-cut,
and our algorithm is a single-thread solver based on branch-and-bound.

Table \ref{tab:CPU time for random graphs} shows the CPU time for random graphs.
The symbol $\epsilon$ shows that the CPU time is less than 0.01 sec.
The column LB shows the weight of initial solutions given by PLS.
For all conditions,
our algorithm MECQ obtained optimal solutions in a shorter time than previous methods.
For the random graphs, the initial solution given by PLS does not improve performance.

Table \ref{tab:iterations for random graphs} shows the number of recursive iterations of MECQ and EWCLIQUE.
The value of the time [\si{\micro}s] is calculated by CPU time per iteration.
The value of the iteration ratio is the ratio of iterations of MECQ and EWCLIQUE.
From the result, we confirm that
although the computation time of upper bounds of MECQ is longer than EWCLIQUE,
the iterations of MECQ is less than our previous algorithm EWCLIQUE.
The difference of CPU time can be explained by this.
One reason is MECQ calculates upper bounds of equation \ref{eq:three components using psuedo vertex weight}
at once.
EWCLIQUE calculates upper bounds in two steps and calculates the sum of two upper bounds.

\begin{table*}[tb]
    \caption{CPU time for random graphs [sec]}
    \label{tab:CPU time for random graphs}
    \centering
    \begin{adjustbox}{max width=\textwidth}
        \begin{tabular}{rrr|r|rrr|r|rr}
            \hline
            &     &   \multicolumn{1}{c|}{optimal}     &     \multicolumn{4}{c|}{MECQ + PLS}      &        MECQ &       EWCLIQUE &       MIP  \\
            $|V|$ & $d$ &  \multicolumn{1}{c|}{weight} & LB &  PLS & MECQ &  Total & without PLS & \multicolumn{1}{c}{\cite{shimizu2018branch}} & \multicolumn{1}{c}{\cite{shimizu2017mathematical}} \\ \hline
              300 & 0.1 &   60.7 &   60.7 & 0.01 & $\epsilon$ &   0.01 & $\epsilon$ & $\epsilon$ &   91.54  \\
              350 & 0.1 &   64.8 &   64.8 & 0.01 & $\epsilon$ &   0.01 & $\epsilon$ & $\epsilon$ &  190.96  \\
            15000 & 0.1 &  174.7 &  148.3 & 0.73 &     408.56 & 409.29 &     402.75 &     460.90 & $>$1000  \\
              250 & 0.2 &   97.3 &   97.3 & 0.02 & $\epsilon$ &   0.02 & $\epsilon$ & $\epsilon$ &   64.26  \\
              280 & 0.2 &  102.4 &  102.4 & 0.02 & $\epsilon$ &   0.02 & $\epsilon$ & $\epsilon$ &  119.47  \\
             5500 & 0.2 &  254.8 &  212.2 & 0.37 &     319.27 & 319.64 &     319.93 &     440.29 & $>$1000  \\
              200 & 0.3 &  150.0 &  150.0 & 0.02 & $\epsilon$ &   0.02 & $\epsilon$ & $\epsilon$ &   39.16  \\
              250 & 0.3 &  155.5 &  155.5 & 0.03 &       0.01 &   0.03 &       0.01 &       0.01 &   97.16  \\
             2500 & 0.3 &  332.8 &  291.1 & 0.26 &     227.70 & 227.96 &     232.99 &     459.05 & $>$1000  \\
              160 & 0.4 &  185.5 &  185.5 & 0.02 &       0.01 &   0.03 &       0.01 &       0.01 &   21.98  \\
              200 & 0.4 &  224.0 &  224.0 & 0.03 &       0.01 &   0.04 &       0.01 &       0.02 &   57.54  \\
             1400 & 0.4 &  444.3 &  406.8 & 0.21 &     293.71 & 293.92 &     295.09 &     758.22 & $>$1000  \\
              140 & 0.5 &  272.7 &  272.7 & 0.01 &       0.01 &   0.02 &       0.01 &       0.02 &   21.28  \\
              170 & 0.5 &  300.6 &  300.6 & 0.03 &       0.03 &   0.06 &       0.03 &       0.06 &   52.72  \\
              750 & 0.5 &  560.3 &  546.5 & 0.15 &     164.32 & 164.47 &     164.76 &     603.91 & $>$1000  \\
              120 & 0.6 &  399.0 &  399.0 & 0.01 &       0.02 &   0.03 &       0.02 &       0.05 &   18.10  \\
              130 & 0.6 &  424.6 &  424.6 & 0.01 &       0.03 &   0.04 &       0.03 &       0.07 &   28.57  \\
              450 & 0.6 &  754.2 &  745.9 & 0.03 &     125.43 & 125.46 &     123.59 &     716.48 & $>$1000  \\
              100 & 0.7 &  583.5 &  583.5 & 0.01 &       0.03 &   0.04 &       0.04 &       0.11 &   15.12  \\
              110 & 0.7 &  607.1 &  607.1 & 0.01 &       0.06 &   0.07 &       0.06 &       0.24 &   31.23  \\
              270 & 0.7 & 1049.7 & 1049.1 & 0.02 &      61.97 &  61.99 &      62.78 &     589.15 & $>$1000  \\
               80 & 0.8 &  879.0 &  879.0 & 0.01 &       0.04 &   0.05 &       0.05 &       0.16 &    7.28  \\
               90 & 0.8 &  978.0 &  978.0 & 0.01 &       0.11 &   0.12 &       0.12 &       0.44 &   21.51  \\
              170 & 0.8 & 1580.2 & 1580.2 & 0.01 &      35.82 &  35.83 &      37.13 &     485.50 & $>$1000  \\
               70 & 0.9 & 1708.4 & 1708.4 & 0.01 &       0.09 &   0.10 &       0.11 &       0.62 &    3.80  \\
               80 & 0.9 & 2059.2 & 2059.2 & 0.01 &       0.35 &   0.37 &       0.37 &       2.93 &   14.84  \\
              110 & 0.9 & 2666.4 & 2666.4 & 0.02 &      22.40 &  22.41 &      23.27 &     590.83 & $>$1000  \\ \hline
        \end{tabular}
    \end{adjustbox}
\end{table*}

\begin{table*}[tb]
\caption{Iterations for random graphs}
\label{tab:iterations for random graphs}
\centering
\begin{adjustbox}{max width=\textwidth}
    \begin{tabular}{rr|r|rr|rr|r}
        \hline
        &     & \multicolumn{3}{c|}{MECQ} & \multicolumn{2}{c|}{EWCLIQUE} & $\frac{\mbox{MECQ}}{\mbox{EWCLIQUE}}$ \\
        & & with PLS & \multicolumn{2}{c|}{without PLS} & & & \multicolumn{1}{c}{iteration} \\
        $|V|$ & $d$ & iterations & iterations & time [\si{\micro}s] & iterations & time [\si{\micro}s] & \multicolumn{1}{c}{ratio} \\ \hline
          300 & 0.1 &      312.6 &      357.2 & $\epsilon$ &       1835.1 & $\epsilon$ & 19.46\% \\
          350 & 0.1 &      443.4 &      514.5 & $\epsilon$ &       2721.0 & $\epsilon$ & 18.91\% \\
        15000 & 0.1 & 56824211.6 & 56842808.1 &       7.09 &  702255007.1 &       0.66 &  8.09\% \\
          250 & 0.2 &      974.5 &     1043.8 & $\epsilon$ &       6412.8 & $\epsilon$ & 16.28\% \\
          280 & 0.2 &     1424.7 &     1515.2 & $\epsilon$ &       9862.9 & $\epsilon$ & 15.36\% \\
         5500 & 0.2 & 75843118.8 & 75882002.3 &       4.22 & 1537843711.0 &       0.29 &  4.93\% \\
          200 & 0.3 &     1547.8 &     1644.2 & $\epsilon$ &      14169.8 & $\epsilon$ & 11.60\% \\
          250 & 0.3 &     3449.5 &     3896.9 &       2.57 &      34844.2 &       0.29 & 11.18\% \\
         2500 & 0.3 & 65558818.2 & 65637225.6 &       3.55 & 1824084938.8 &       0.25 &  3.60\% \\
          160 & 0.4 &     2740.6 &     2895.5 &       3.45 &      31813.3 &       0.31 &  9.10\% \\
          200 & 0.4 &     5062.9 &     5868.7 &       1.70 &      70701.9 &       0.28 &  8.30\% \\
         1400 & 0.4 & 73127999.6 & 73361911.2 &       4.02 & 2993314273.1 &       0.25 &  2.45\% \\
          140 & 0.5 &     4541.5 &     5200.1 &       1.92 &      88105.7 &       0.23 &  5.90\% \\
          170 & 0.5 &    11374.6 &    11829.1 &       2.54 &     224608.1 &       0.27 &  5.27\% \\
          750 & 0.5 & 37702959.2 & 38847817.4 &       4.24 & 2342210511.1 &       0.26 &  1.66\% \\
          120 & 0.6 &     8166.3 &     8804.9 &       2.27 &     208388.4 &       0.24 &  4.23\% \\
          130 & 0.6 &    11338.8 &    13157.6 &       2.28 &     288494.1 &       0.24 &  4.56\% \\
          450 & 0.6 & 27505132.9 & 28469833.9 &       4.34 & 2725875895.6 &       0.26 &  1.04\% \\
          100 & 0.7 &    12737.1 &    13792.0 &       2.90 &     437892.1 &       0.25 &  3.15\% \\
          110 & 0.7 &    24203.6 &    26032.4 &       2.30 &     950029.7 &       0.25 &  2.74\% \\
          270 & 0.7 & 13547235.5 & 14604499.2 &       4.30 & 2141882035.0 &       0.28 &  0.68\% \\
           80 & 0.8 &    17659.5 &    20083.0 &       2.49 &     617626.3 &       0.26 &  3.25\% \\
           90 & 0.8 &    37616.7 &    45298.8 &       2.65 &    1578193.4 &       0.28 &  2.87\% \\
          170 & 0.8 &  7947844.6 &  8715999.5 &       4.26 & 1510657832.0 &       0.32 &  0.58\% \\
           70 & 0.9 &    30957.5 &    37263.2 &       2.95 &    2355972.7 &       0.26 &  1.58\% \\
           80 & 0.9 &   102852.5 &   111080.3 &       3.33 &    9974393.5 &       0.29 &  1.11\% \\
          110 & 0.9 &  5009165.9 &  5402618.4 &       4.31 & 1951189872.0 &       0.30 &  0.28\% \\ \hline
    \end{tabular}
\end{adjustbox}
\end{table*}

\subsection{DIMACS benchmarks}
\label{sec:DIMACS benchmarks}

DIMACS is a set of benchmarks for MCP \cite{dimacs}.
We used them as benchmarks of MEWCP by giving weights to edges in the same way as
\cite{shimizu2018branch,pullan2008approximating,gouveia2015solving,hosseinian2018nonconvex}.
For each edge $(v_i,v_j)$, we gave the weight $w(v_i,v_j)=(i+j) \mbox{ mod } 200+1$.

For the DIMACS benchmarks,
the results of computational experiments for previous methods are shown in \cite{hosseinian2018nonconvex,gouveia2015solving}.
Hence we also compared our algorithm with the branch-and-bound algorithm CBQ \cite{hosseinian2018nonconvex}
and mathematical programming formulations proposed in \cite{gouveia2015solving}.
We quote the results shown in \cite{hosseinian2018nonconvex,gouveia2015solving} to our result tables.
The CPU used in \cite{hosseinian2018nonconvex} is Intel\textregistered Core\texttrademark i7 2.90 GHz.
The CPU used in \cite{gouveia2015solving} is Intel\textregistered Core\texttrademark i7 3.40 GHz.

Table \ref{tab:CPU time for DIMACS} shows the CPU time for DIMACS.
Table \ref{tab:iterations for DIMACS} shows the number of recursive iterations of MECQ and EWCLIQUE.
Except for hamming8-2 and san200\_0.9\_1,
our algorithm MECQ obtained optimal solutions in a shorter time than others.
For hamming8-2 and san200\_0.9\_1, MECQ has usable performance.
Only the MECQ with PLS solved all instances in the table in 1000 sec.
Although the initial solutions given by PLS did not improve performance in random graphs,
they worked well in DIMACS.
Especially for benchmark families gen and san, it reduced a lot of computation time.

\begin{table*}[tb]
    \caption{CPU time for DIMACS (sec)}
    \label{tab:CPU time for DIMACS}
    \centering
    \begin{adjustbox}{max width=\textwidth}
        \begin{tabular}{rrrr|r|rrr|r|rrrrr}
            \hline
            &      &      &   \multicolumn{1}{c|}{optimal}         &  \multicolumn{4}{c|}{MECQ + PLS}  &           MECQ &  EWCLIQUE &       \multicolumn{1}{c}{MIP} &        \multicolumn{1}{c}{CBQ} & \multicolumn{1}{c}{G\&M}  & \multicolumn{1}{c}{IP$_{base}$}\\
            &  $|V|$ &  $d$ & \multicolumn{1}{c|}{weight} & LB &  PLS & MECQ &  Total & without PLS & \multicolumn{1}{c}{\cite{shimizu2018branch}} & \multicolumn{1}{c}{\cite{shimizu2017mathematical}} & \multicolumn{1}{c}{\cite{hosseinian2018nonconvex}} & \multicolumn{1}{c}{\cite{gouveia2015solving}} & \multicolumn{1}{c}{\cite{gouveia2015solving,hosseinian2018nonconvex}}  \\ \hline
                 brock200\_1 &  200 & 0.75 &  21230 &  21230 &       0.02 &      24.81 &      24.82 &      24.12 &     338.31 & $>$1000 &   3047.565 & $>$10800 & $>$10800 \\
                 brock200\_2 &  200 & 0.50 &   6542 &   6542 &       0.04 &       0.04 &       0.08 &       0.06 &       0.10 &  109.66 &      7.436 &  9464.24 & $>$10800 \\
                 brock200\_3 &  200 & 0.61 &  10303 &  10303 &       0.01 &       0.40 &       0.41 &       0.42 &       1.27 &  743.58 &     55.905 & $>$10800 & $>$10800 \\
                 brock200\_4 &  200 & 0.66 &  13967 &  13967 &       0.01 &       1.16 &       1.17 &       1.17 &       4.84 & $>$1000 &    188.031 & $>$10800 & $>$10800 \\
                      C125.9 &  125 & 0.90 &  66248 &  66248 &       0.02 &      24.21 &      24.23 &      24.83 &    $>$1000 & $>$1000 &   4558.170 & $>$10800 & $>$10800 \\
                  c-fat200-1 &  200 & 0.08 &   7734 &   7734 &       0.01 & $\epsilon$ &       0.01 & $\epsilon$ & $\epsilon$ &    4.80 &      0.483 &    3.870 &   31.296 \\
                  c-fat200-2 &  200 & 0.16 &  26389 &  26389 &       0.02 & $\epsilon$ &       0.02 & $\epsilon$ & $\epsilon$ &    4.72 &      0.890 &   33.260 &   49.671 \\
                  c-fat200-5 &  200 & 0.43 & 168200 & 168200 &       0.07 & $\epsilon$ &       0.07 & $\epsilon$ &      74.31 &    7.06 &   $>$10800 &  155.300 &  134.578 \\
                 c-fat500-10 &  500 & 0.37 & 804000 & 804000 &       0.31 &       0.23 &       0.54 &       0.24 &    $>$1000 &  745.93 &            &          &          \\
                  c-fat500-1 &  500 & 0.04 &  10738 &  10738 &       0.01 & $\epsilon$ &       0.01 & $\epsilon$ & $\epsilon$ &  171.91 &            &          &          \\
                  c-fat500-2 &  500 & 0.07 &  38350 &  38350 &       0.02 & $\epsilon$ &       0.03 & $\epsilon$ & $\epsilon$ &  399.90 &            &          &          \\
                  c-fat500-5 &  500 & 0.19 & 205864 & 205864 &       0.10 &       0.01 &       0.11 &       0.01 &       0.43 &  264.44 &            &          &          \\
                  DSJC500\_5 &  500 & 1.00 &   9626 &   9626 &       0.03 &      10.14 &      10.17 &      10.00 &      44.43 & $>$1000 &            &          &          \\
            gen200\_p0.9\_55 &  200 & 0.90 & 150839 & 150839 &       0.02 &     236.94 &     236.96 &    $>$1000 &    $>$1000 & $>$1000 &            &          &          \\
                  hamming6-2 &   64 & 0.90 &  32736 &  32736 &       0.01 & $\epsilon$ &       0.01 & $\epsilon$ & $\epsilon$ &    0.07 &      4.437 &    0.300 &   17.000 \\
                  hamming6-4 &   64 & 0.35 &    396 &    396 &       0.00 & $\epsilon$ & $\epsilon$ & $\epsilon$ & $\epsilon$ &    0.22 &      0.031 &    1.970 &    6.468 \\
                  hamming8-2 &  256 & 0.97 & 800624 & 800624 &       0.05 &      20.63 &      20.67 &      20.34 &       0.23 &    7.80 &   $>$10800 & $>$10800 & $>$10800 \\
                  hamming8-4 &  256 & 0.64 &  12360 &  12360 &       0.01 &       0.54 &       0.55 &       0.55 &       1.46 &  276.15 &    439.437 & $>$10800 & $>$10800 \\
               johnson16-2-4 &  120 & 0.76 &   3808 &   3766 & $\epsilon$ &       0.17 &       0.17 &       0.18 &       0.25 &   57.40 &     84.687 & $>$10800 & $>$10800 \\
                johnson8-2-4 &   28 & 0.56 &    192 &    192 & $\epsilon$ & $\epsilon$ & $\epsilon$ & $\epsilon$ & $\epsilon$ &    0.03 & $\epsilon$ &    0.140 &    0.421 \\
                johnson8-4-4 &   70 & 0.77 &   6552 &   6552 & $\epsilon$ & $\epsilon$ & $\epsilon$ & $\epsilon$ & $\epsilon$ &    0.40 &      0.687 &    2.340 &   65.171 \\
                     keller4 &  171 & 0.65 &   6745 &   6745 &       0.02 &       0.20 &       0.22 &       0.21 &       0.70 &  167.84 &     42.218 & $>$10800 & $>$10800 \\
                    MANN\_a9 &   45 & 0.93 &   5460 &   5460 &       0.01 &       0.02 &       0.03 &       0.02 &       0.02 &    1.22 &      1.906 &    9.390 &  130.344 \\
                p\_hat1000-1 & 1000 & 0.24 &   5436 &   5253 &       0.11 &       1.91 &       2.02 &       1.97 &       2.92 & $>$1000 &            &          &          \\
                p\_hat1500-1 & 1500 & 0.25 &   7135 &   6875 &       0.18 &      19.28 &      19.46 &      19.65 &      32.73 & $>$1000 &            &          &          \\
                 p\_hat300-1 &  300 & 0.24 &   3321 &   3321 &       0.03 &       0.01 &       0.04 &       0.01 &       0.01 &  146.10 &      3.281 &  1273.05 & 8489.750 \\
                 p\_hat300-2 &  300 & 0.49 &  31564 &  31564 &       0.15 &       6.44 &       6.59 &       6.96 &      42.90 & $>$1000 &    171.281 & $>$10800 & $>$10800 \\
                 p\_hat500-1 &  500 & 0.25 &   4764 &   4764 &       0.05 &       0.08 &       0.14 &       0.08 &       0.13 & $>$1000 &            &          &          \\
                 p\_hat700-1 &  700 & 0.25 &   5185 &   5185 &       0.08 &       0.37 &       0.45 &       0.38 &       0.52 & $>$1000 &            &          &          \\
                     san1000 & 1000 & 0.50 &  10661 &   6588 &       2.94 &      17.18 &      20.11 &      19.46 &    $>$1000 & $>$1000 &            &          &          \\
              san200\_0.7\_1 &  200 & 0.70 &  45295 &  45295 &       0.05 &       0.09 &       0.14 &       1.95 &      54.88 &   28.72 &            &          &          \\
              san200\_0.7\_2 &  200 & 0.70 &  15073 &  15073 &       0.12 &       2.34 &       2.46 &       4.02 &      17.86 & $>$1000 &            &          &          \\
              san200\_0.9\_1 &  200 & 0.90 & 242710 & 242710 &       0.04 &      38.58 &      38.62 &    $>$1000 &      12.56 &  206.01 &            &          &          \\
              san200\_0.9\_2 &  200 & 0.90 & 178468 & 178468 &       0.03 &      88.84 &      88.88 &     378.96 &     833.49 & $>$1000 &            &          &          \\
              san400\_0.5\_1 &  400 & 0.50 &   7442 &   7442 &       0.37 &       0.14 &       0.51 &       0.63 &      60.36 & $>$1000 &            &          &          \\
              san400\_0.7\_1 &  400 & 0.70 &  77719 &  77719 &       0.31 &      15.55 &      15.86 &     645.24 &    $>$1000 & $>$1000 &            &          &          \\
              san400\_0.7\_2 &  400 & 0.70 &  44155 &  44155 &       0.24 &      52.55 &      52.78 &     497.74 &    $>$1000 & $>$1000 &            &          &          \\
              san400\_0.7\_3 &  400 & 0.70 &  24727 &  24727 &       0.11 &     211.11 &     211.22 &     325.71 &    $>$1000 & $>$1000 &            &          &          \\
                sanr200\_0.7 &  200 & 0.70 &  16398 &  16398 &       0.01 &       4.14 &       4.16 &       4.34 &      18.67 & $>$1000 &            &          &          \\ \hline
        \end{tabular}
    \end{adjustbox}
\end{table*}

\begin{table*}[tb]
    \caption{Iterations for DIMACS}
    \label{tab:iterations for DIMACS}
    \centering
    \begin{adjustbox}{max width=\textwidth}
        \begin{tabular}{rrr|r|rr|rr|r}
            \hline
            &  &   & \multicolumn{3}{c|}{MECQ} & \multicolumn{2}{c|}{EWCLIQUE} & $\frac{\mbox{EWCLIQUE}}{\mbox{MECQ}}$ \\
            & & & with PLS & \multicolumn{2}{c|}{without PLS} & & & \multicolumn{1}{c}{iteration} \\
            graph & $|V|$ & $d$ & iterations & iterations & time [\si{\micro}s] & iterations & time [\si{\micro}s] & \multicolumn{1}{c}{ratio} \\ \hline
                 brock200\_1 &  200 & 0.75 &  6074449 &  6103600 &       3.95 & 1328614116 &       0.25 &   0.46\% \\
                 brock200\_2 &  200 & 0.50 &    14073 &    19906 &       3.01 &     345371 &       0.29 &   5.76\% \\
                 brock200\_3 &  200 & 0.61 &   114928 &   130560 &       3.22 &    4282305 &       0.30 &   3.05\% \\
                 brock200\_4 &  200 & 0.66 &   287037 &   310735 &       3.77 &   13814425 &       0.35 &   2.25\% \\
                      C125.9 &  125 & 0.90 &  4329351 &  4551897 &       5.45 &            &            &          \\
                  c-fat200-1 &  200 & 0.08 &       28 &       38 & $\epsilon$ &        632 & $\epsilon$ &   6.01\% \\
                  c-fat200-2 &  200 & 0.16 &       97 &      107 & $\epsilon$ &       6780 & $\epsilon$ &   1.58\% \\
                  c-fat200-5 &  200 & 0.43 &      113 &      141 & $\epsilon$ &  138193445 &       0.54 &   0.00\% \\
                 c-fat500-10 &  500 & 0.37 &     3853 &     3947 &      60.81 &            &            &          \\
                  c-fat500-1 &  500 & 0.04 &       61 &       66 & $\epsilon$ &       1605 & $\epsilon$ &   4.11\% \\
                  c-fat500-2 &  500 & 0.07 &       92 &      126 & $\epsilon$ &       4679 & $\epsilon$ &   2.69\% \\
                  c-fat500-5 &  500 & 0.19 &      324 &      404 &      24.75 &    1227023 &       0.35 &   0.03\% \\
                  DSJC500\_5 &  500 & 1.00 &  2419493 &  2494606 &       4.01 &  200152687 &       0.22 &   1.25\% \\
            gen200\_p0.9\_55 &  200 & 0.90 & 13443080 &          &            &            &            &          \\
                  hamming6-2 &   64 & 0.90 &       32 &       48 & $\epsilon$ &        896 & $\epsilon$ &   5.36\% \\
                  hamming6-4 &   64 & 0.35 &      265 &      265 & $\epsilon$ &        340 & $\epsilon$ &  77.94\% \\
                  hamming8-2 &  256 & 0.97 &   479056 &   479125 &      42.45 &      65731 &       3.50 & 728.92\% \\
                  hamming8-4 &  256 & 0.64 &    86597 &    88679 &       6.20 &    2475100 &       0.59 &   3.58\% \\
               johnson16-2-4 &  120 & 0.76 &   309697 &   309697 &       0.58 &    1905154 &       0.13 &  16.26\% \\
                johnson8-2-4 &   28 & 0.56 &       79 &       79 & $\epsilon$ &        150 & $\epsilon$ &  52.67\% \\
                johnson8-4-4 &   70 & 0.77 &      354 &      361 & $\epsilon$ &       3953 & $\epsilon$ &   9.13\% \\
                     keller4 &  171 & 0.65 &    61141 &    63170 &       3.32 &    2158496 &       0.32 &   2.93\% \\
                    MANN\_a9 &   45 & 0.93 &    35116 &    35128 &       0.57 &     116041 &       0.17 &  30.27\% \\
                p\_hat1000-1 & 1000 & 0.24 &   582124 &   591826 &       3.33 &    9890185 &       0.30 &   5.98\% \\
                p\_hat1500-1 & 1500 & 0.25 &  4552934 &  4565892 &       4.30 &  106284583 &       0.31 &   4.30\% \\
                 p\_hat300-1 &  300 & 0.24 &     3975 &     4221 &       2.37 &      50151 &       0.20 &   8.42\% \\
                 p\_hat300-2 &  300 & 0.49 &   876123 &  1053858 &       6.60 &  134486327 &       0.32 &   0.78\% \\
                 p\_hat500-1 &  500 & 0.25 &    27485 &    27601 &       2.90 &     468371 &       0.28 &   5.89\% \\
                 p\_hat700-1 &  700 & 0.25 &   110426 &   113403 &       3.35 &    1678557 &       0.31 &   6.76\% \\
                     san1000 & 1000 & 0.50 &   345909 &   383550 &      50.74 &            &            &          \\
              san200\_0.7\_1 &  200 & 0.70 &     6694 &   425248 &       4.59 &  387149894 &       0.14 &   0.11\% \\
              san200\_0.7\_2 &  200 & 0.70 &   335623 &   680897 &       5.90 &   48732878 &       0.37 &   1.40\% \\
              san200\_0.9\_1 &  200 & 0.90 &  1637404 &          &            &   12731307 &       0.99 &   0.00\% \\
              san200\_0.9\_2 &  200 & 0.90 &  4463309 & 25206475 &      72.79 &  303169816 &       2.75 &   1.72\% \\
              san400\_0.5\_1 &  400 & 0.50 &    11065 &    68967 &       9.13 &   43132933 &       1.40 &   0.16\% \\
              san400\_0.7\_1 &  400 & 0.70 &   547682 & 53869639 &     166.74 &            &            &          \\
              san400\_0.7\_2 &  400 & 0.70 &  2841349 & 57665379 &      64.93 &            &            &          \\
              san400\_0.7\_3 &  400 & 0.70 & 20591310 & 39873392 &      32.99 &            &            &          \\
                sanr200\_0.7 &  200 & 0.70 &  1045157 &  1196523 &       3.63 &   55871909 &       0.33 &   2.14\% \\ \hline
        \end{tabular}
    \end{adjustbox}
\end{table*}

\clearpage

\section{Conclusion}
\label{sec:Conclusion}

We proposed a branch-and-bound algorithm MECQ for MEWCP.
Our algorithm calculates upper bounds using vertex coloring.
In the vertex coloring procedure,
our algorithm assigns edge weights to vertices to calculate upper bounds.
By some computational experiments,
we confirmed our algorithm is faster than previous ones.

Although modern techniques are proposed for MCP \cite{li2017minimization,san2016improved},
they cannot be directly applied to MEWCP because of edge weights.
To apply such techniques to MEWCP, modifying them is a future work.

Recently, quantum annealer is studied to solve NP-hard problems including MCP
\cite{chapuis2018finding,bian2017solving}.
Quantum annealer can solve the quadratic unconstrained binary optimization (QUBO) problem.
Since quantum annealer solvers are heuristic, efficient exact solvers are required to evaluate them.
QUBO can be formulated as MEWCP
by the vertex-and-edge-weighted complete graphs where negative weight is allowed.
Hence handling negative weight is one future work.

\bibliographystyle{plain}
\bibliography{ref}

\end{document}